 \newtheorem{Theorem}{THEOREM}
 \theoremstyle{definition}
\newcommand\nn\nonumber
 \newcommand{\R}{\mathbb{R}}
 \newcommand\infspec{{\rm{inf\, spec\,}}}
 \DeclareMathOperator{\const}{const}
\begin{document}

\title[Absence of bound states implies non-negativity of the
scattering length]{Absence of bound states implies non-negativity \\
  of the scattering length}

\author[R. Seiringer]{Robert Seiringer} \address{Department of Mathematics and Statistics, McGill
  University, 805 Sherbrooke Street West, Montreal, QC H3A 2K6,
  Canada} 
\email{robert.seiringer@mcgill.ca}

\date{April 26, 2012}

\begin{abstract}
  We show that bosons interacting via pair potentials with negative
  scattering length form bound states for a suitable number of
  particles. In other words, the absence of many-particle bound states
  of any kind implies the non-negativity of the scattering length of
  the interaction potential.
\end{abstract}

\thanks{\copyright\, 2012 by  
  the author. This paper may be reproduced, in its entirety, for
  non-commercial purposes.}

\maketitle

\phantom{x}

While there is a huge literature on two-particle bound states in
quantum mechanics, relatively little is known concerning bound states
of more than two particles. Via the introduction of center-of-mass and
relative coordinates, the former question reduces to a one-particle
problem, i.e., the spectral analysis of an operator of the form
$-\Delta + V(x)$, with $\Delta$ denoting the Laplacian and $V$ a
potential, i.e., a multiplication operator, which decays at
infinity. Also for more than two particles the center-of-mass motion
can be separated, but the resulting operator is more complicated, and
the potential does not decay in certain directions.

In this article, we shall show that if the pair interaction potential
has a negative scattering length, there always exist bound states for
a suitable number of particles. At low density and low temperature
such a system will thus not behave like an atomic gas, but will form
molecules or even much larger bound clusters. We note that the atoms
used in current experiments on Bose-Einstein condensation of
ultra-cold dilute gases have scattering length of either sign (see,
e.g., \cite{DGPS,BDZ}).

For $N\geq 2$, we consider the Hamiltonian
$$
H_N = - \sum_{i=1}^N \Delta_i  + \sum_{1\leq i<j \leq N} v(x_i-x_j)
$$
on $L^2(\R^{3N})$, with $\Delta$ the usual Laplacian on $\R^3$. We
assume that $v\in L^1(\R^3)$ is real-valued and radial, and that $v_-
\in L^{3/2}(\R^3)$, where $v_-$ denotes the negative part of $v$,
i.e., $v_-(x)=\max\{0,-v(x)\}$. Under these assumptions, the quadratic
form defined by $H_N$ is bounded from below, and hence $H_N$ gives
rise to a self-adjoint operator via the Friedrichs extension \cite{RS2}.  

Let $E_N = \infspec H_N$ denote the ground state energy of $H_N$, and let 
$$
\Sigma_N = \min_{1\leq n \leq N-1} \left\{ E_{N-n} + E_n\right\}
$$
denote the lowest energy of two separate clusters of particles. Note that $E_N \leq \Sigma_N \leq E_{N-1}$. The
HVZ Theorem \cite{RS4} implies that if $E_N<\Sigma_N$, then the
$N$-particle system has a bound state, i.e., the Hamiltonian $H_N$ has,
after removal of the center-of-mass motion, an eigenvalue at the
bottom of its spectrum. We note that $E_N$ is always attained in the
sector of permutation-invariant functions \cite{LS}, hence we are effectively
dealing with bosons even though we defined $H_N$ on the full space
$L^2(\R^{3N})$ for simplicity.

The {\em scattering length} of the interaction potential $v$ is defined as 
$$a := \lim_{R\to \infty} a_R
$$
where $a_R$ is given by the variational principle \cite{LY,LSSY}
\begin{equation}\label{def:ar}
a_R = \frac 1 {8\pi} \inf\left\{ \int_{|x|\leq R} \left( 2 |\nabla f(x)|^2 + v(x)|f(x)|^2 \right) dx \, : \,  \ f(x) = 1 \ \text{for $|x|=R$} \right\}\,.
\end{equation}
The infimum is over all $H^1$-functions on the ball $\{|x|\leq
R\}$ satisfying the boundary condition $f(x)=1$ on the boundary of the
ball. Under the assumption that $H_2\geq 0$, i.e., the absence of
two-particle bound states, the existence of a minimizer for
(\ref{def:ar}) was shown in \cite[Appendix~A]{LY}. Note that since
$a_{R_1} \leq a_{R_2} + (8\pi)^{-1}\int_{R_1\leq |x|\leq R_2} v(x) dx$
for $R_2>R_1$, the limit $\lim_{R\to \infty} a_R$ always exists for
$v\in L^1(\R^3)$, but it could equal $-\infty$.  It is easy to see that
finiteness of $a$  implies that $H_2\geq 0$, i.e.,
\begin{equation}\label{ah2}
a> -\infty \quad  \Rightarrow \quad  H_2\geq 0 \,.
\end{equation}

Eq.~(\ref{def:ar}) is the correct definition of the scattering length
only in the absence of two-particle bound states. The scattering length can be defined by
other means even in the presence of two-particle bound states (see,
e.g., \cite{HS}), but it is not given by the above variational
principle in this case. Since our main concern here is the case of
$H_2\geq 0$, we find it convenient to work with the
definition~(\ref{def:ar}), however.

Our main result is the following.

\begin{Theorem}\label{thm}
 If $H_N\geq 0$ for all $N\geq 2$, then $a\geq 0$. 
\end{Theorem}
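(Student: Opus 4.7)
The plan is to prove the contrapositive: assuming $a<0$, I exhibit, for some $N \geq 2$, a trial state $\Psi$ with $\langle \Psi, H_N \Psi \rangle < 0$, thereby violating $H_N \geq 0$. Since $a = \lim_{R \to \infty} a_R < 0$, fix $R$ so large that $a_R < 0$. The hypothesis $H_2 \geq 0$ guarantees, by \cite[Appendix A]{LY}, the existence of a minimizer $f_R$ of the variational principle (\ref{def:ar}); extend it to all of $\R^3$ by $f_R \equiv 1$ for $|x| \geq R$. Then $f_R - 1 \in H^1(\R^3)$ with support in $\{|x|\leq R\}$, and
$$\int_{\R^3} \bigl( 2 |\nabla f_R(x)|^2 + v(x) f_R(x)^2 \bigr) \, dx \;=\; 8\pi a_R \;<\; 0.$$

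Fix a smooth, positive, normalized profile $\phi_0 \in C_c^\infty(\R^3)$ and set $\phi(x) := L^{-3/2} \phi_0(x/L)$ for a scale $L \gg R$ to be chosen. I would use the Jastrow trial state
$$\Psi(x_1, \ldots, x_N) \;=\; \prod_{i=1}^N \phi(x_i) \, \prod_{1 \leq i < j \leq N} f_R(x_i - x_j).$$
Expanding $\sum_i \int |\nabla_i \Psi|^2$ and combining with $\sum_{i<j} \int v(x_i-x_j) |\Psi|^2$, the leading contributions to $\langle \Psi, H_N \Psi \rangle$ are a single-particle kinetic piece $\sim N L^{-2} \int |\nabla \phi_0|^2$ and a pair piece
$$\sum_{i<j} \int \phi(x_i)^2 \phi(x_j)^2 \bigl( 2|\nabla f_R(x_i-x_j)|^2 + v(x_i - x_j) f_R(x_i - x_j)^2 \bigr)\, dx_i\, dx_j.$$
Because $f_R - 1$ lives on the scale $R \ll L$, one has $\phi(x_i)^2 \phi(x_j)^2 \approx \phi((x_i+x_j)/2)^4$ on the support of the integrand, so the pair piece is close to $\binom{N}{2} \cdot 8\pi a_R \int |\phi|^4 = -\binom{N}{2} \cdot 8 \pi |a_R| L^{-3} \|\phi_0\|_4^4$. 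Choosing $L = c N |a_R|$ with $c > 0$ small, the negative $O(N^2 / L^3)$ pair term dominates the positive $O(N/L^2)$ kinetic term, giving a negative leading upper bound on $E_N$ provided the error terms can be shown to be subleading.

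The substantive technical work lies in controlling two classes of corrections so that the leading picture survives. First, the normalization $\|\Psi\|^2 = \int \prod_i \phi(x_i)^2 \prod_{i<j} f_R(x_i - x_j)^2$ differs from $1$ by cluster integrals of $f_R^2 - 1$; these must be shown to contribute a $1 + o(1)$ factor in the dilute regime. Second, writing $\nabla_i \log \Psi = \nabla \log \phi(x_i) + \sum_{j \neq i} \nabla_i \log f_R(x_i - x_j)$ and squaring produces three-body cross terms $\nabla_i \log f_R(x_i - x_j) \cdot \nabla_i \log f_R(x_i - x_k)$ with no definite sign. In the purely repulsive setting such terms would be controlled by $0 \leq f \leq 1$, but here $v$ may have negative parts so $f_R$ need not be so bounded; these cross contributions must instead be estimated directly using the compact support of $f_R - 1$ and the smallness of the effective diluteness parameter $\rho R^3 = R^3/(c^3 N^2 |a_R|^3) \to 0$. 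This is the main obstacle. Once both classes of corrections are shown to be of lower order than the pair term, one obtains $\langle \Psi, H_N \Psi\rangle / \|\Psi\|^2 < 0$ for $N$ sufficiently large, which by Rayleigh--Ritz produces a bound state of $H_N$, contradicting the hypothesis $H_N \geq 0$ and closing the proof. The whole analysis is in the spirit of the Dyson / Lieb--Seiringer--Yngvason upper bound for the dilute Bose gas, adapted to the present indefinite-sign situation.
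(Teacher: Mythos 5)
Your overall strategy---prove the contrapositive by exhibiting a trial state whose pair-correlation factor lives on the scale $R$ of the potential while the particles are spread over a much larger scale $L$, so that the pair energy $\sim \binom{N}{2}\,8\pi a_R\,\|\phi\|_4^4<0$ beats the one-body kinetic cost $\sim N L^{-2}$---is exactly the right one, and it is the same leading-order picture as in the paper. But the proposal has a genuine gap precisely where you flag ``the main obstacle'': with the full Jastrow product $\prod_{i<j}f_R(x_i-x_j)$ the error terms are not merely technical. Since $a_R<0$, the minimizer satisfies $f_R(x)\approx 1+|a_R|/|x|>1$ outside the range of $v$, so $f_R\not\le 1$ and none of the monotonicity/normalization tricks from the repulsive LSSY upper bound apply. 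Worse, on a configuration where $m$ particles cluster within distance $R$ the weight $\prod_{i<j}f_R^2$ can be as large as $(1+c)^{\binom{m}{2}}$ at a volume cost of only $(R/L)^{3(m-1)}$; the combinatorial growth $e^{cm^2/2}$ can overwhelm the exponentially small volume factor, so $\|\Psi\|^2$ and the cross terms $\nabla_i f_R(x_i-x_j)\cdot\nabla_i f_R(x_i-x_k)$ are not obviously controllable by diluteness alone. You state that these contributions ``must be estimated directly'' but do not do so, and it is not clear the estimate closes. (A smaller slip: $\int_{\R^3}(2|\nabla f_R|^2+v f_R^2)=8\pi a_R+\int_{|x|>R}v$, not $8\pi a_R$; one must choose $R$ so that this \emph{sum} is negative, which is possible since $v\in L^1$.)

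The paper circumvents all of this with a different trial function: $\Psi=\varphi(t)\prod_k g(x_k)$ where $t=\min_{i<j}|x_i-x_j|$ is the \emph{minimal} pair distance and $\varphi$ is the (bounded) scattering minimizer. There is then exactly one correlation factor, so $\varphi(t)^2\le 1+c$ uniformly and the normalization is trivial; the identity $\sum_k|\nabla_k\varphi(t)|^2=2\varphi'(t)^2$ holds exactly, so there are no three-body cross terms at all; and the only error to control is $|\varphi(t)^2-\varphi(|x_1-x_2|)^2|$, which is bounded by $(1+c)$ times the number of ``other'' close pairs, yielding errors $O(N^4R^3/L^6)$ that are subleading for $L\ll N\ll L^{3/2}$. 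To repair your argument you would either need to import this device or supply a genuine cluster-expansion control of the Jastrow state with $f_R>1$; as written, the proof is an outline whose hardest step is acknowledged but not carried out.
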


As discussed above, finiteness of $a$ implies that $E_2=0$. If $a<0$,
Theorem~\ref{thm} implies the existence of an $N\geq 3$ with
$\Sigma_N = 0$ but $E_N<0$.  In other words, negativity of the scattering length
of $v$ implies the existence of bound states of some kind.
We note that the converse of
Theorem~\ref{thm} does not hold. There exist interaction potentials
$v$ with positive scattering length such that $H_2\geq 0$ but $H_3$
has negative spectrum, i.e., with three-particle bound states but no
two-particle bound states \cite{B}.

For large, negative $a$, the existence of three-particle bound
states is well known \cite{Ef,Yaf,OS,Sob,SRZ}. Our Theorem~\ref{thm} implies that
$a$ does not actually have to be large for bound states to exist. As
long as it is negative, bound states exist, but they might require
more than three particles.

We note that if $\int_{\R^3} v(x) dx < 0$, then $a<0$. In this case,
it is not difficult to see that there exist bound states for an
infinite sequence of particle numbers. This follows from the fact that
$E_N \sim - \const N^2$ for large $N$, while obviously $E_N \sim
\const N$ if bound states exist only for finitely many particle
numbers. It remains an open problem to decide whether bound
states exist for an infinite sequence of particle numbers for all
interaction potentials with $a<0$.

Theorem~\ref{thm} can be extended to dimensions larger than three. It
does not extend to one and two dimensions, however. The corresponding
expression (\ref{def:ar}) in one or two dimensions is always positive
in the absence of two-particle bound states \cite{LY,LSSY}. Another
important difference concerns the fact that arbitrarily shallow
negative potentials lead to the existence of two-particle bound states
in one and two dimensions, which is not the case in three and more
dimensions.

In the remainder of this paper, we shall give the proof of Theorem~\ref{thm}.

\begin{proof}[Proof of Theorem~\ref{thm}]
  We shall assume that $a<0$, and show that for large enough $N$ we
  can find a $\Psi \in L^2(\R^{3N})$ with $\langle \Psi | H_N
  \Psi\rangle < 0$. Since $a<0$ and $v\in L^1(\R^3)$, we can find a $R>0$
  such that $a_R + (8\pi)^{-1}\int_{|x|>R} v(x) dx < 0$. In particular, there
  exists a non-negative function $\varphi : \R_+\to \R_+$ with
  $\varphi(t) = 1$ for $t\geq R$ such that
\begin{equation}\label{def:b}
b:=  \int_{\R^3} \left( 2  \varphi'(|x|)^2 + v(x)\varphi(|x|)^2 \right) dx < 0\,.
\end{equation}
For instance, one can choose $\varphi$ to be the minimizer of
(\ref{def:ar}) for $|x|\leq R$. By an approximation argument, we can
assume $\varphi$ to be bounded without loss of generality. That is,
for some $c\geq 0$, we have
$$
\varphi(t)^2 \leq 1 + c\, \theta(R-t)
$$
for all $t\geq 0$, with $\theta$ denoting the Heaviside step function
$$
\theta(t) = \left\{ \begin{array}{cc} 1 & \text{for $t\geq 0$} \\ 0 & \text{for $t<0$.} \end{array}\right.
$$
 
For $g\in C_0^\infty(\R^3)$ real-valued and with $\int_{\R^3} g(x)^2 dx =1$, we shall consider the function
$$
\Psi(x_1,\dots,x_N) = \varphi( t ) \prod_{k=1}^N g(x_k)
$$
where
$$
t= t(x_1,\dots,x_N) = \min_{1\leq i<j\leq N} |x_i-x_j|\,.
$$
Note that $t$ is Lipschitz continuous and differentiable almost everywhere. A simple calculation yields
$$
\sum_{k=1}^N |\nabla_k \varphi(t)|^2 = 2 \varphi'(t)^2\,.
$$

An integration by parts shows that 
\begin{align}\nonumber
 \sum_{k=1}^N \int_{\R^{3N}} \left| \nabla_k \Psi \right|^2 dx_1 \cdots dx_N  & = 2 \int_{\R^{3N}} \varphi'(t)^2 \prod_{j=1}^N g(x_j)^2 dx_j \\ & \quad -  N \int_{\R^{3N}} g(x_1) \Delta g(x_1)  \varphi(t)^2 dx_1 \prod_{j=2}^N g(x_j)^2 dx_j \,. \label{f2t}
\end{align}
Here we have also used the permutation-symmetry of $\Psi$ in the last term, in order to replace the sum over $k$ by just one summand, multiplied by $N$. 
We can bound 
$$
\varphi'(t)^2 \leq \sum_{i<j} \varphi'(|x_i-x_j|)^2
$$
and hence the first term on the right side of (\ref{f2t}) is bounded above by 
$$
N(N-1) \int_{\R^6} g(x)^2 \varphi'(|x-y|)^2 g(y)^2 \,dx\,dy\,.
$$
To bound the second term, we simply use $\varphi(t)^2 \leq 1 + c$. 
We shall also drop the positive part of $g(x)\Delta g(x)$, and arrive at the upper bound
\begin{align}\nonumber
 \sum_{k=1}^N \int_{\R^{3N}} \left| \nabla_k \Psi \right|^2 dx_1 \cdots dx_N   & \leq N(N-1) \int_{\R^6} g(x)^2 \varphi'(|x-y|)^2 g(y)^2 \,dx\,dy \\ & \quad  + (1+c) N \| [g\Delta g]_- \|_1 \,, \label{ll0}
\end{align}
where $[s]_- = \max\{0,-s\}$ denotes the negative part.

Next we shall investigate the expectation value of the interaction
potential. Because of the permutation-symmetry of $\Psi$, it suffices
to consider $\langle\Psi| v(x_1-x_2) \Psi\rangle$, multiplied 
by $N(N-1)/2$, the number of pairs of particles. To bound this term,
we shall use the fact that
\begin{align}\nonumber
&\left| \varphi(t)^2 - \varphi(|x_1-x_2|)^2\right| \\ & \leq (1+c) \left( \sum_{j=3}^N \theta(R-|x_1-x_j|) + \sum_{k=2}^N \sum_{j=k+1}^N \theta(R-|x_k-x_j|) \right)\,. \label{hol}
\end{align}
To see the validity of this bound, note that the left side is at most
$1+c$. The right side is at least $1+c$, unless all the pairs $(x_k,x_j)$ with
$\{k,j\}\neq \{1,2\}$ are separated a distance larger than $R$, in which case it equals zero. Also the left side is then zero, however; either $|x_1-x_2|\geq R$, in which case it is zero since also $t\geq R$ and $\varphi(s)=1$ for $s\geq R$, or $|x_1-x_2|<R$, in which case $t=|x_1-x_2|$. In any case, (\ref{hol}) holds.

Using (\ref{hol}), we have
\begin{align}\nonumber
&\left\langle\Psi\left| v(x_1-x_2) \right. \!\Psi\right\rangle \\ \nonumber & \leq \int_{\R^6} g(x)^2 v(x-y) \varphi(|x-y|)^2 g(y)^2 \, dx\, dy \\ & \nonumber \quad  + (1+c) \frac{(N-2)(N-3)}{2} \left(\int_{\R^6} g(x)^2 |v(x-y)| g(y)^2 \, dx\, dy \right) \\ \nonumber & \quad \qquad\qquad\qquad\qquad\qquad\qquad \times \left( \int_{\R^6} g(x)^2 \theta(R-|x-y|) g(y)^2 \, dx\, dy  \right) \\ &  \quad + 2 (1+c) (N-2) \int_{\R^9} g(x)^2 |v(x-y)| g(y)^2 \theta(R-|y-z|) g(z)^2 \, dx\, dy\, dz \label{ll} \,.
\end{align}
The terms on the second line on the right side can be bounded with the aid of a Schwarz inequality as
\begin{equation*}\label{schw2}
\int_{\R^6} g(x)^2 |v(x-y)| g(y)^2 \, dx\,dy \leq  \|v\|_1 \|g\|_4^4 
\end{equation*}
and
\begin{equation*}\label{schw1}
\int_{\R^6} g(x)^2 \theta(R-|x-y|) g(y)^2 \, dx\,dy\leq \frac {4\pi}3 R^3 \|g\|_4^4 \,.
\end{equation*}
Similarly, we can bound the integral in the last line of (\ref{ll}) as 
$$
  \int_{\R^9} g(x)^2 |v(x-y)| g(y)^2 \theta(R-|y-z|) g(z)^2 \, dx\, dy\, dz \leq \frac {4\pi}3 R^3  \|v\|_1 \|g\|_\infty^2 \|g\|_4^4\,.
$$
As a result, we have thus shown that
\begin{align}\nonumber
\left\langle\Psi\left| v(x_1-x_2) \right. \!\Psi\right\rangle  & \leq \int_{\R^6} g(x)^2 v(x-y) \varphi(|x-y|)^2 g(y)^2 \, dx\, dy \\ & \quad  + (1+c)\frac{2\pi}{3}  (N-2) R^3 \|v\|_1  \left(  (N-3) \|g\|_4^8   + 4 \|g\|_\infty^2 \|g\|_4^4 \right) \label{ll2} \,.
\end{align}

By combining the bounds (\ref{ll0}) and (\ref{ll2}), we obtain
\begin{align*}
\left\langle \Psi\left| H_N\right.\!\Psi\right\rangle & \leq \frac{N(N-1)}2 \int_{\R^6} g(x)^2 h(x-y) g(y)^2 \, dx \, dy \\ & \quad + (1+c) N \left( \| [g\Delta g]_- \|_1   +\frac{\pi}{3} N^2 R^3 \|v\|_1  \|g\|_4^4 \left( N \|g\|_4^4  + 4  \|g\|_\infty^2 \right)\right)
\end{align*}
where $h$ denotes the function
$$
h(x) = 2 \varphi'(|x|)^2 + v(x) \varphi(|x|)^2 \,.
$$
Note that $\int_{\R^3} h(x) dx = b < 0$ by (\ref{def:b}).

Let $g_0\in C_0^\infty(\R^3)$ be real-valued and $L^2$-normalized. For $L>0$,  we shall choose 
$$
g(x) = L^{-3/2} g_0(x/L)\,.
$$
Then $g$ is also $L^2$-normalized, and satisfies $\|g\|_4 = L^{-3/4} \|g_0\|_4$,
$\|g\|_\infty = L^{-3/2} \|g_0\|_\infty$, and $\| [g\Delta g]_-\|_1 =
L^{-2}\| [g_0\Delta g_0]_-\|_1$. Denoting the various constants
collectively by $C$, we thus have
\begin{equation}
\left\langle \Psi\left| H_N\right.\!\Psi\right\rangle  \leq \frac{N(N-1)}2 \int_{\R^6} g(x)^2 h(x-y) g(y)^2 \, dx \, dy  + C \frac{N}{L^2} \left( 1  + \|v\|_1 \frac{N^3 R^3}{L^4} \right)  \,. \label{fr}
\end{equation}
The first term on the right side equals
$$
\frac{N(N-1)}{2L^6} \int_{\R^6} g_0(x/L)^2 h(x-y) g_0(y/L)^2 \, dx \, dy \,.
$$
Since $h$ is an $L^1$-function, we can use dominated convergence in Fourier space to conclude that 
$$
\lim_{L\to \infty} L^{-3} \int_{\R^6} g_0(x/L)^2 h(x-y) g_0(y/L)^2 \, dx \, dy = \int_{\R^3} h(x) dx \,\|g_0\|_4^4 = b \|g_0\|_4^4 < 0\,.
$$
For all $L$ large enough, we can thus bound 
$$
\frac{N(N-1)}{2L^6} \int_{\R^6} g_0(x/L)^2 h(x-y) g_0(y/L)^2 \, dx \, dy \leq \frac b 4 \frac{N(N-1)}{L^3} \|g_0\|_4^4\,.
$$
If we choose $L \ll N \ll L^{3/2}$, the last term in (\ref{fr}) is much smaller than $N^2/L^3$ for large $L$. For this choice of $N$, we thus have $\langle \Psi|H_N \Psi\rangle <0$ for large enough $L$. This concludes the proof.
\end{proof}
\bigskip

\noindent {\it Acknowledgments.} The author thanks Wilhelm Zwerger for
interesting comments. Partial financial support by NSERC is gratefully
acknowledged.

\bigskip


\end{document}